\newtheorem{lem}{Lemma}
\newenvironment{proof}[1][Proof]{\noindent\emph{#1.} }{\ $\Box$\smallskip}
\journal{Computers \& Operations Research}
\begin{document}

\begin{frontmatter}



\title{Tactical Fixed Job Scheduling \\ with Spread-Time Constraints}


\author[label1,label2]{Shuyu Zhou}
\author[label3]{Xiandong Zhang}
\author[label4]{Bo Chen}
\author[label2]{Steef van de Velde}

\address[label1]{Department of Mathematics, East China University of Science and Technology, \\ Shanghai 200237, China}
\address[label2]{Rotterdam School of Management, Erasmus University, \\ 3000 DR Rotterdam, The Netherlands}
\address[label3]{School of Management, Fudan University, \\ Shanghai 200433, China}
\address[label4]{Warwick Business School, University of Warwick, \\ Coventry CV4 7AL, UK}

\begin{abstract}
We address the tactical fixed job scheduling problem with spread-time constraints. In such a problem, there are a fixed number of classes of machines and a fixed number of groups of jobs. Jobs of the same group can only be processed by machines of a given set of classes. All jobs have their fixed start and end times. Each machine is associated with a cost according to its machine class. Machines have spread-time constraints, with which each machine is only available for $L$ consecutive time units from the start time of the earliest job assigned to it. The objective is to minimize the total cost of the machines used to process all the jobs. For this strongly NP-hard problem, we develop a branch-and-price algorithm, which solves instances with up to $300$ jobs, as compared with CPLEX, which cannot solve instances of $100$ jobs. We further investigate the influence of machine flexibility by computational experiments. Our results show that limited machine flexibility is sufficient in most situations.
\end{abstract}

\begin{keyword}
machine scheduling \sep spread-time constraint \sep flexible manufacturing \sep branch and price \sep neighborhood search
\end{keyword}

\end{frontmatter}



\section{Introduction}
The \emph{basic} fixed job scheduling (FJS) problem, also known as interval scheduling problem or $k$-track assignment problem, was firstly described by \citet{dantzig1954minimizing}. In such a problem, there are a set ${\cal J}=\{J_1,\ldots,J_n\}$ of $n$ jobs and each job $J_j$ ($1\le j\le n$) requires processing without interruption from a given start time $r_j$ to a given end time $d_j$. Each machine, while available all the time, can process at most one job at a time. The objective is to determine the minimum number of identical machines needed to process all the jobs. \citet{gupta1979optimal} give an $O(n \log n)$ time exact algorithm and they show that such a running time is the best possible.

In many production environments, jobs are classified into disjoint groups and machines into different classes. A machine of a specific class can only process jobs of some specific groups. In particular, a single-purpose machine is inflexible and can only process jobs of a specific group. Added with these constraints, the FJS problem becomes difficult. There are two main variants of such a general FJS problem: operational and tactical.

In the operational FJS problem, a fixed number of identical machines are given. One is to select a subset of jobs from ${\cal J}$ to process so as to maximize the total profit from processing these jobs, where a profit is generated from processing a job. The problem has its roots on the capacity planning of aircraft maintenance personnel for KLM Royal Dutch Airlines (\citet{kroon1995exact}, \citet{Kroon1990job}), where each arriving plane requires one or more maintenance jobs, each within a fixed time period, and each available engineer is licensed to carry out jobs on at most two different aircraft types. Each maintenance job has a priority index. The objective is to find an assignment of aircraft to engineers so as to maximize the total priority index of all assigned maintenance jobs. \citet{kroon1995exact} provide an exact algorithm when there is only one single machine class and two approximation algorithms when there are multiple machine classes. \citet{eliiyi2006spread}, \citet{bekki2008operational}, \citet{eliiyi2009fixed}, \citet{solyali2009operational}, and \citet{eliiyi2010working} study the problem with various job characteristics and machine environments, such as spread-time constraints (see below for more details), working-time constraints, machine-dependent job weights, uniform parallel machines, etc.

The tactical FJS problem is a dual to its operational counterpart stated above, in which one is to minimize the total cost of processing all jobs of ${\cal J}$, where usage of each machine incurs a cost. This problem arose from the gate capacity planning at Amsterdam Schiphol Airport (\citet{kroon1997exact}). \citet{kroon1997exact} modeled each incoming aircraft as a fixed job with given start and end time, and modeled each gate as a machine. Gates could handle aircraft only from a predetermined set of aircraft types. The problem is to serve all incoming aircraft on time with minimum total number of gates.  If the number $c$ of machine classes is fixed and $c>2$, \citet{kolen1992license} prove that the problem is strongly NP-hard. On the other hand, if $c$ is not fixed, \citet{kroon1997exact} prove that the problem is strongly NP-hard even if preemption is allowed. They also present an exact branch-and-bound algorithm for solving the problem to optimality.

For a general overview of research in interval scheduling, we refer the reader to \citet{kovalyov2007fixed} and \citet{kolen2007interval}.

In many practical applications of the FJS model, the \emph{spread-time constraints} are also important, with which each machine is available only for $L$ consecutive time units from the start of the earliest job assigned to it. Such constraints first appeared in the bus driver scheduling problem studied by \citet{martello1986heuristic}. \citet{fischetti1987fixed} show that the basic FJS problem with spread-time constraints is NP-hard and they develop an exact branch-and-bound algorithm. For the same problem, \citet{fischetti1992approximation} provide a 2-approximation algorithm that runs in $O(n\log n)$ time.

In this paper, we address the tactical (general) FJS problem with spread-time constraints (TFJSS). To the best of our knowledge, this study is the first to address the problem. Since the problem is strongly NP-hard, we provide a branch-and-price algorithm, which  solves to optimality randomly generated instances with up to $300$ jobs within one hour.  Instances with 100 jobs are well solved to optimality within 40 seconds. This is in contrast to the fact that, on the standard ILP formulation of the problem, CPLEX cannot solve instances with $100$ jobs within one hour. With the same algorithm, we solve the TFJSS problem with additional constraints.

Branch-and-price algorithms have been proved to be a successful technique in solving large mixed integer programs (\citet{barnhart1998branch}). In recent years, many problems in diversified fields are solved with this kind of approach. These problems include, for example, the $p$-median location problem (\citet{senne2005branch}), the batch processing machine scheduling problem (\citet{rafiee2010branch}), and the surgical case sequencing problem (\citet{cardoen2009sequencing}), etc. Given an optimization problem, the establishment of a branch-and-price approach includes the designs of a pricing algorithm and a branching strategy as well as an optimality proof. The efficiency of a branch-and-price algorithm relies heavily on those designs. Our pricing algorithm is based on a dynamic programming model, which is a familiar approach in many branch-and-price algorithms. Our branching strategy is based on precedence relations. \citet{belien2006scheduling} use the same branching idea in solving the trainees scheduling problem in a hospital department. Our computational experiments show that our design is efficient.

The paper is organized as follows. After providing two ILP formulations for the TFJSS problem in Section~\ref{sec:model_algo}, one standard and the other for column generation purpose, we describe our branch-and-price algorithm in Section~\ref{se:process}. Computational results of the algorithm are provided in Section~\ref{sec:computation}, in which we test the computational effectiveness of our algorithm and, additionally, test a theory on machine flexibility as stated in \citet{jordan1995principles}.  We draw some conclusions in Section~\ref{sec:conclusions}. In the appendix, we provide a neighborhood search algorithm that can be embedded in our branch-and-price algorithm to improve its efficiency in many cases.

\section{Formulations}
\label{sec:model_algo}


Denote by $M^i$ the set of machines of class $i$ ($1\le i\le c$) and by $C_j$ ($1\le j\le n$) the set of machine classes containing machines that can process job $J_j$. Let $w_i$ and ${\cal J}^i$ be, respectively, the cost of any machine of $M^i$ and the set of jobs that can be processed by a machine of $M^i$.


\subsection{Basic formulation}

Without loss of generality, we first re-index all jobs so that $r_1\leq r_2\leq \cdots \le r_n$. A pair of jobs $J_j, J_k \in {\cal J}^i$ with $j<k$ are said to be \emph{compatible} with each other if they can be assigned together and processed by a single machine of $M^i$, i.e., $r_k \ge d_j$ and $d_k-r_j \le L$. For any $J_j\in {\cal J}^i$ ($1\le i\le c$), let
\[
\mathcal{A}_j=\{J_k \in \mathcal{J}^i: k>j \textrm{ and job $J_k$ is not compatible with } J_j\}.
\]
That is, ${\cal A}_j$ consists of all jobs having larger start times than $r_j$ that cannot be assigned to a machine together with job $J_j$ due to either overlapping processing intervals or the spread-time constraint.

Define binary decision variable $y_k^i$ to represent whether or not machine $k \in M^i$ is used, and binary decision variable $x_{jk}^i$ to denote whether or not job $J_j \in \mathcal{J}^i$ is assigned to machine $k \in M^i$. The TFJSS problem can be formulated as an ILP as follows:

\begin{equation}\label{eq:ZIP_obj}
z_{0}=\min \sum_{i=1}^{c} w_i \sum_{k\in M^i} y_k^i
\end{equation}
subject to
\begin{eqnarray}
x_{jk}^i \leq y_k^i, & & (J_j\in \mathcal{J}^i, k \in M^i, i=1,\dots, c) \label{eq:x_ijk<=y_ik} \\
x_{jk}^i+x_{\ell k}^i \leq 1, & & (J_\ell\in {\cal A}_j, J_j\in \mathcal{J}^i, k\in M^i, i=1,\dots,c) \label{eq:x_ijk+x_ilk<=1} \\
\sum_{i\in C_j}\sum_{k\in M^i} x_{jk}^i = 1, & & (j=1,\dots,n) \label{eq:sum_x_ijk=1}\\
x_{jk}^i, y_k^i \in \{0,1\}. & & (J_j \in \mathcal{J}^i, k\in M^i, i=1,\dots, c) \label{eq:x_ijk_in0,1}
\end{eqnarray}
The objective function (\ref{eq:ZIP_obj}) is to  minimize the total cost of used machines. Constraints (\ref{eq:x_ijk<=y_ik}) guarantee that machine $k$ is used when job $J_j$ is assigned to that machine. Constraints (\ref{eq:x_ijk+x_ilk<=1}) make sure that the jobs are compatible on any used machine. Constraints (\ref{eq:sum_x_ijk=1}) state that each job is processed exactly once. Finally, (\ref{eq:x_ijk_in0,1}) are the binary constraints on the assignment variables.

\subsection{Formulation for column generation}

Define a \emph{single-machine schedule} as a string of jobs that are compatible with each other. For any $i=1,\ldots, c$, let $a_{js}^i$ be a binary constant that is equal to 1 if and only if job $J_j\in \mathcal{J}^i$ is a job in single-machine schedule $s$. Accordingly, column $a_s^i=(a_{1s}^i, \dots, a_{ns}^i)^T$ represents the jobs in single-machine schedule $s$ that can be processed by a machine of $M^i$. For convenience, in what follows, we often refer a single-machine schedule also as a \emph{column}.

Let $S^i$ be the set of all single-machine schedules of jobs in $\mathcal{J}^i$. We introduce binary decision variables $x_s^i$ ($s \in S^i$ and $i=1,\dots,c$) such that $x_s^i=1$ if and only if single-machine schedule $s$ is used for processing the corresponding jobs on a machine of $M^i$. Therefore, our problem is to select a set of single-machine schedules so that all jobs in $\mathcal{J}$ are processed exactly once and the total machine cost is minimized. Consequently, the problem can be formulated in a form of column generation as follows:
\begin{equation}\label{eq:CG_obj}
z_{I}=\min \sum_{i=1}^c\sum_{s\in S^i} w_i x_s^i
\end{equation}
subject to
\begin{eqnarray}
\sum_{i=1}^c\sum_{s\in S^i} a_{js}^ix_s^i\geq 1, && j=1,\dots,n, \label{eq:sum_ajsxs=1} \\
x_s^i\in \{0,1\} && s\in S^i, i=1,\dots,c. \label{eq:xs_in0,1}
\end{eqnarray}
Constraints (\ref{eq:sum_ajsxs=1}) ensure that each job is executed at least once. The reason why we use inequality rather than equality is to constrain the dual space and speed up the convergence. Note that the dual variables corresponding to constraint (\ref{eq:sum_ajsxs=1}) are nonnegative. The binary constraints (\ref{eq:xs_in0,1}) ensure machine schedule $s$ is selected once or not at all.

Since the number of columns involved in the above formulation can go exponential large with increased $n$, we apply column generation method. To this end, we relax the binary constraints on the variables to obtain the LP relaxation of ILP~(\ref{eq:CG_obj})--(\ref{eq:xs_in0,1}). Denote by $X_L$ an optimal solution of the LP relaxation and by $z_{L}$ the corresponding optimal objective value.

\section{The solution process}
\label{se:process}

The column generation method solves the LP relaxation problem in which only a subset of the variables are available. An initial solution is generated by a greedy algorithm presented in Section~\ref{sct:Init}. New columns will be added, which may decrease the solution value, if the optimal solution has not been determined yet. These new columns are identified through a pricing algorithm that solves an optimization problem at each iteration. We present the pricing algorithm in Section~\ref{sct: pricing}. To ensure that an integer solution is found, we apply a branch-and-price approach together with column generation. We present the details of the whole algorithm in Section~\ref{sct: B&P}.

\subsection{Selection for initial solution}\label{sct:Init}

Our solution process starts with the following algorithm, which selects an initial solution.

\bigskip\noindent
\textbf{Algorithm GR}
\begin{description}
\vspace{-6pt}
\item[Step 1.]Re-index all jobs if necessary so that $r_1 \le r_2 \le \cdots \le r_n$. Let $S = \emptyset$ and $j := 1$.
\vspace{-6pt}
\item[Step 2.]If $j > n$, then stop and output a feasible LP solution $X$ that corresponds to the set $S$ of single-machine schedules (columns).
\vspace{-6pt}
\item[Step 3.]Check whether there exists a single-machine schedule $s \in S$ such that job $J_j$ can be appended to $s$ to form a valid single-machine schedule $s'$. (a) If yes, set $S:= S\backslash\{s\}\cup\{s'\}$; (b) if no, let $\ell = \min\{i: i\in C_j\}$ and construct a new single-machine schedule $s^\ell$, which consists of the single job $J_j$, and set $S := S \cup \{s^\ell\}$; (c) in either case, set $j:= j + 1$ and go to Step 2.
\end{description}

\subsection{The pricing algorithm for column generation}
\label{sct: pricing}

According to the duality theory, a solution to a minimization LP problem is optimal if and only if the \emph{reduced cost} of each variable is nonnegative. The reduced cost $P_s^i$ of (relaxed) variable $x_s^i$ is given by
\[
P_s^{i}= w_i - \sum_{J_j\in \mathcal{J}^i}\lambda_j a_{js}^i,
\]
where $\lambda_1,\dots,\lambda_n$ are the values of the dual variables corresponding to the current LP solution. To test whether the current solution is optimal, we determine if there exists a single-machine schedule $s\in S^i$ so that $P_s^{i}$ is negative. To this end, we solve the pricing problem of finding the machine schedule in $S^i$ with the minimum reduced cost for each $i$. Because $w_i$ is a constant for any $s\in S^i$, we essentially need to maximize
\begin{equation}\label{eq:pricing_obj}
\tilde{P}_s^i:=\sum_{J_j\in \mathcal{J}^i}\lambda_j a_{js}^i.
\end{equation}
If the current value of $\tilde{P}_s^i  \leq w_i$ for each $i$, we have already found an optimal solution for the LP relaxation problem. Otherwise, we use the following pricing algorithm to search for new columns $\{a_{s}^i\}$ to maximize (\ref{eq:pricing_obj}).

The pricing algorithm is based on dynamic programming (DP) and uses a forward recursion that exploits the property that on each machine the jobs are sequenced in order of increasing indices. Furthermore, since our DP-based pricing algorithm will be combined with a branch-and-bound process and hence used iteratively, it is constrained to generate columns of those single-machine schedules in which each job $J_j$ has a specified set $\mathcal{J}_j^+\subseteq \mathcal{J}$ of immediate successors, where an \emph{immediate successor} of job $J_j$ in a single-machine schedule $s$ is the job that is the first to be processed in $s$ after completion of job $J_j$. Denote job precedence constraints by
\begin{equation}\label{eq:def-P}
    \mathcal{P}=\bigcup_{j=1}^n \mathcal{J}_j^+.
\end{equation}
We refer the following DP-based algorithm as DP($\mathcal{P}$). For $1\le j \le k\le n$ and $1\le i\le c$, let
\[
{F}_i(j,k) = \left\{s\in S^i: J_j, J_k \in \mathcal{J}^i
\textrm{ are the first and last job in } s\right\}
\]
and
\[
f_i(j,k)=\left\{
\begin{array}{ll}
  \max_{s\in {F}_i(j,k)}\tilde{P}_s^i,  &  \textrm{if } {F}_i(j,k)\neq \emptyset \\
  0, &  \textrm{otherwise.}
\end{array}
\right.
\]
Then, after initialization $f_i(j,j)=\lambda_j$ for any $J_j\in {\cal J}^i$ ($i=1, \ldots, c$), the recursion for $k =j+1,\dots, n$ and $J_k\in {\cal J}^i$ is as follows:
\[
f_i(j,k) =   \left \{
\begin{array}{ll}
0, & \mbox{ if } J_k\in {\cal A}_j \\
\max_{\ell:\, j\leq \ell < k;\, J_k \in {\cal J}_{\ell}^+ } f_i(j,\ell) + \lambda_k, & \mbox{ if } J_k\notin {\cal A}_j.
\end{array}
\right.
\]
Let $f_i^*=\max_{1\le j\le k\le n} f_i(j,k)$ ($i=1, \ldots, c$).

If $f_i^*\leq w_i$ for any $i$, then the current LP solution is optimal. Otherwise, it is not and we need to introduce new columns to the problem. Candidates are associated with those $i$ for which $f_i^*> w_i$. An important implementation issue is to determine the number of columns to add to the LP after having solved the pricing algorithm. The more columns we add per iteration, the fewer LPs we need to solve, but the bigger the LPs become. An empirically good choice for our computational platform appears to be adding those ten columns that correspond to those ten values of $i$ for which $f_i^*$ are most positive. The pricing algorithm requires $O(c^{} n^2)$ time and space.

We solve the LP problem after the addition of new columns and calculate the new set of dual variables $\lambda_1, \ldots, \lambda_n$. Then repeat the DP-based pricing algorithm to test the optimality of our new LP solution. Such iteration process is repeated until we find an optimal LP solution to the original problem.

\subsection{The branch-and-price algorithm}\label{sct: B&P}

Recall that $X_L=\{\tilde{x}_s^i:s\in S^i, i=1,\ldots,c\}$ denote an optimal solution of the LP relaxation problem of ILP~(\ref{eq:CG_obj})--(\ref{eq:xs_in0,1}). Throughout this subsection until the formal description of the branch-and-price algorithm at the end, we explain in detail how we deal with $X_L$ at the root node of our branching tree. However, everything is applicable to the case where we are at any descendant node with replacement of $X_L$ by an optimal solution to the corresponding LP problem at that node and ``optimality'' adjusted with respect to the corresponding sub-problem at the node.

If all the components $\{\tilde{x}_s^{i}\}$ of $X_L$ are integers, then we find an optimal IP solution at the node. If not, we apply branch-and-price approach to close the integrality gap. The simple branching strategy of fixing a variable at either zero or one does not work in combination with column generation, since even if we fix the variable at zero, a pricing algorithm may again make this variable positive. Therefore, we use a different branching strategy, which is based on the notion of  predecessors and successors.

For any given single-machine schedule $s$, all the jobs in $s$ are processed in a unique sequence of increasing start times. Therefore, any job in $s$ has a unique immediate \emph{predecessor} and \emph{successor}, where we define the immediate predecessor (respectively, successor) of the first (respectively, last) job in $s$ as a dummy job $J_0$ (respectively, $J_{n+1}$). For any optimal LP solution $X_L$, let
\[
\Phi(X_L)=\{s: s\in S^i \textrm{ such that $0<\tilde{x}_s^{i}<1$ for some $1\le i\le c$}\}.
\]
That is, $\Phi(X_L)$ consists of those single-machine schedules (columns) corresponding to fractional components of $X_L$. Therefore, if $X_L$ is an integral solution, then $\Phi(X_L)=\emptyset$. Denote
\[
  \mathcal{J}_f(X_L) = \{J_j\in \mathcal{J}: \textrm{$J_j$ is a job in $s\in \Phi(X_L)$}\},
\]
and for any $s \in \Phi(X_L)$ and any $J_j\in\mathcal{J}_f(X_L)$,
\begin{eqnarray*}
  S_j(X_L) &=& \{s: s \textrm{ contains job } J_j \}, \\
  \mathcal{I}(s) &=& \{i: 1\le i\le c,\, s\in S^i \textrm{ with } 0<\tilde{x}_s^{i}<1 \}.
\end{eqnarray*}
In other words, jobs in $\mathcal{J}_f(X_L)$ are those that are scheduled only fractionally in $X_L$ (and hence not legitimately scheduled yet) on machines of classes $i\in \mathcal{I}(s)$ with $s \in \Phi(X_L)$, and $S_j(X_L)\subseteq \Phi(X_L)$ are those columns that contain job $J_j$. Observe that, if a job is already in a column $s$ corresponding to an $\tilde{x}_s^{i}=1$, then this job can be removed from any column $s'\in \Phi(X_L)$ without compromising the optimality of $X_L$. Therefore, without loss of generality, unless $\Phi(X_L)=\emptyset$ we have
\begin{equation}\label{eq:feasibility}
 \sum_{s\in S_j(X_L)}\sum_{i\in \mathcal{I}(s)} x_s^i\geq 1\ \textrm{ for any } J_j\in \mathcal{J}_f(X_L).
\end{equation}

Our branching strategy is based on the following two lemmas concerning predecessors and successors.
\begin{lem}\label{le:same_successor}
If each job $J_j\in\mathcal{J}_f(X_L)$ has the same immediate successor in every $s\in S_j(X_L)$, then one can find an optimal LP solution $X_L^*$ in which each job $J_j\in\mathcal{J}_f(X_L^*)$ has the same immediate predecessor and successor in every $s\in S_j(X_L^*)$.
\end{lem}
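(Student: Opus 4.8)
The plan is to exploit the hypothesis to collapse the fractional part of $X_L$ into a tree and then comb that tree into disjoint chains. First I would note that, since every $J_j\in\mathcal{J}_f(X_L)$ has the same immediate successor in all columns of $S_j(X_L)$, each fractional column is completely determined by its first job: starting from that job and repeatedly appending the (unique) successor reconstructs the entire column. Consequently the fractional columns are exactly the directed paths running ``up'' to the dummy sink $J_{n+1}$ in an in-tree $T$ on the node set $\mathcal{J}_f(X_L)$, where the parent of each job is its successor. (A single job-sequence may be used by several machine classes $i\in\mathcal{I}(s)$; this is irrelevant to $T$, and I will only ever track total weights.) For a node $v$ let $g_v$ be the total weight of columns starting exactly at $v$; then the left-hand side of (\ref{eq:feasibility}) is the coverage $C_v=\sum_{u\in\mathrm{sub}(v)}g_u$, with $\mathrm{sub}(v)$ the subtree rooted at $v$, so that $C_v\ge 1$ for every $v$.

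Next I would record two facts. First, in any optimal solution a job having a child in $T$ has no start weight: if $g_v>0$ while $v$ has a child $c$, then deleting the columns that start at $v$ keeps every affected job (namely $v$ and its ancestors) covered, since each of them also lies on the columns through $c$ and $C_c\ge 1$; this strictly lowers the cost, contradicting optimality. Hence start weight sits only at leaves of $T$, where $C_v=g_v\ge 1$. Second, a job $v$ has the same immediate predecessor in all of its columns precisely when it has at most one child in $T$ (its predecessor being that child, or $J_0$ at a path start). Thus the lemma reduces to producing an optimal solution whose tree has at most one child per node, i.e.\ a disjoint union of directed chains.

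The engine is a single de-branching move. At any node with two or more children I keep the columns through one child $c_1$ and, for every other child $c$, shorten each column passing through $c$ so that it now ends at $c$ (discarding the part from the branch node upward). I would argue this move (a) preserves cost, because each column is merely replaced by a shorter prefix of the same class carrying the same weight, so the per-class machine counts are untouched; (b) preserves feasibility, because the only jobs that lose weight are the branch node and its ancestors, and each of them still lies on the columns through $c_1$ and so retains coverage at least $C_{c_1}\ge 1$; and (c) preserves the unique-successor property, since all columns through $c$ are shortened identically, so $c$'s successor becomes $J_{n+1}$ uniformly while no other successor changes. Each application strictly decreases the total length of the columns, so iterating terminates at an optimal $X_L^*$ whose tree has at most one child per node. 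Because start weights are never moved, the first fact still applies to $X_L^*$, and together with the chain structure it gives every $J_j\in\mathcal{J}_f(X_L^*)$ a unique immediate predecessor and successor, as required.

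The hard part is the bookkeeping that makes the de-branching move legitimate: showing simultaneously that cost is exactly preserved (which hinges on only \emph{shortening}, never splitting or deleting, columns, so the number of machines of each class is unchanged) and that no coverage constraint drops below $1$ (which hinges on always retaining one full child, whose coverage $\ge 1$ shields all ancestors). I also need to check that shortening never strands a node: since the move is applied only at genuine branch nodes and always keeps one child intact, no job is ever left with zero children and zero start weight. The remaining routine points are the multi-class accounting noted above and the WLOG reduction behind (\ref{eq:feasibility}), under which a job covered by an integral column is removed from the fractional columns so that (\ref{eq:feasibility}) indeed holds for every $J_j\in\mathcal{J}_f(X_L)$.
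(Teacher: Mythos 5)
Your proof is correct, and at its core it runs on the same engine as the paper's proof: cost-neutral shortening of columns, with feasibility protected by the coverage inequality (\ref{eq:feasibility}) applied to one retained predecessor branch. The architecture, however, is genuinely different. The paper argues locally and from the top: it takes the largest-index job $J_\ell\in\mathcal{J}_f(X_L)$ --- which is automatically the last job in every column containing it --- picks a non-dummy predecessor $J_k$, deletes the single job $J_\ell$ from every column in $S_\ell(X_L)\setminus S_k(X_L)$, and then ``repeats.'' You instead make the successor structure global: the hypothesis turns the fractional columns into root-paths of an in-tree, and your de-branching move truncates entire tails at an arbitrary branch node. This buys rigor exactly where the paper is terse: deleting a single \emph{interior} separated job (rather than the whole tail above it) would shift the predecessor conflict onto that job's successor, so the paper's iteration silently requires the kind of invariant-preservation and termination bookkeeping (your total-column-length measure) that you carry out explicitly. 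The price you pay is an extra optimality lemma --- your Fact 1, that start weight sits only at leaves, which implicitly needs $w_i>0$ (an assumption the paper also uses implicitly in Lemma~\ref{le:same_predecessor_successor}); the paper avoids this lemma entirely by always retaining a non-dummy predecessor, so columns beginning at the job being fixed never matter. One small slip: your Fact 2 (``unique predecessor iff at most one child'') is false without Fact 1, since a one-child node with positive start weight has both $J_0$ and its child as predecessors; but because you do invoke Fact 1 on $X_L^*$ in the final assembly, the argument stands.
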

\begin{proof}
Let job $J_\ell \in \mathcal{J}_f(X_L)$ has the largest index $\ell$. Then $J_\ell$ is the last job in any $s\in S_\ell(X_L)$, i.e., with the same immediate successor $J_{n+1}$. Assume job $J_\ell$ has at least two different immediate predecessors in different $s\in S_\ell(X_L)$ and job $J_k \neq J_0$ is one of them. Since $J_\ell$ is the unique immediate successor of $J_k\in \mathcal{J}_f(X_L)$ according to the lemma condition, every single-machine schedule $s\in\Phi(X_L)$ that contains $J_k$ must also include $J_\ell$. Since (\ref{eq:feasibility}) holds when $j$ is replaced by $k$ and $S_k(X_L)\subseteq S_\ell(X_L)$, we can remove $J_\ell$ from all single-machine schedules $s\in S_\ell(X_L)\backslash S_k(X_L)$ to get a new optimal LP solution $X_L'$ such that $\Phi(X_L')\subseteq \Phi(X_L)$ and $X_L'$ also satisfies the lemma condition and, \emph{additionally}, job $J_\ell \in \mathcal{J}_f(X_L')$ has the same immediate predecessor for every $s\in S_\ell(X_L')$.

Now we repeat, if necessary, the process described in the above paragraph with $X_L$ replaced by $X_L'$, which will eventually lead us to a desired optimal LP solution $X_L^*$.
\end{proof}

\begin{lem}\label{le:same_predecessor_successor}
If each job $J_j\in\mathcal{J}_f(X_L)$ has the same immediate predecessor and successor in every $s\in S_j(X_L)$, then one can find a feasible, and hence optimal, ILP solution.
\end{lem}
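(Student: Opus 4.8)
The goal is to show that under the hypothesis — each fractionally-scheduled job $J_j\in\mathcal{J}_f(X_L)$ has a single immediate predecessor and a single immediate successor across all columns $s\in S_j(X_L)$ — we can construct a genuinely integral, feasible solution. The key observation is that fixing both neighbors for every fractional job induces a rigid chain structure on the columns in $\Phi(X_L)$. The plan is to use these predecessor/successor relations to argue that the columns in $\Phi(X_L)$ glue together into full machine schedules, so that the fractional assignment can be rounded to an integral one without increasing cost.

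**Key steps.**

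First I would formalize the chain structure. For each $J_j\in\mathcal{J}_f(X_L)$, write $p(j)$ and $\sigma(j)$ for its unique immediate predecessor and successor. The relation $\sigma$ defines a partial function on the fractional jobs that is injective (two jobs cannot share an immediate successor unless that successor has two predecessors, which would contradict uniqueness when traced backward), so the jobs of $\mathcal{J}_f(X_L)$ decompose into disjoint directed paths, each bounded by the dummy jobs $J_0$ and $J_{n+1}$. Second, I would argue that each such maximal chain is itself a valid single-machine schedule: consecutive jobs along the chain are compatible (they appear consecutively in some legitimate column $s\in\Phi(X_L)$), and compatibility propagates along the chain because the start/end times and the spread-time window are consistent with a single machine. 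Third, I would assign each complete chain to a machine of an appropriate class $i\in\mathcal{I}(s)$, obtaining an integral selection of columns that covers exactly the fractional jobs. Combined with the already-integral part of $X_L$ (the columns with $\tilde{x}_s^i=1$), this yields an integral solution covering every job in $\mathcal{J}$.

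Finally, I would verify feasibility and cost. Feasibility follows because every job is now in exactly one selected chain or one integral column, so the covering constraints (\ref{eq:sum_ajsxs=1}) hold. For the cost, I would use (\ref{eq:feasibility}) together with an averaging or exchange argument: since the fractional columns summed to at least one unit of coverage per job, replacing them by the rounded chains does not raise the total machine cost, so the resulting integral solution is in fact optimal. The main obstacle I anticipate is the second step — proving that an entire chain, assembled from overlapping fragments of several distinct fractional columns, remains a \emph{globally} valid single-machine schedule. Pairwise compatibility of consecutive jobs is immediate, but the spread-time constraint $d_k-r_j\le L$ is a \emph{global} window condition on the first and last job of the chain, and I would need to show that stitching compatible consecutive pairs cannot produce a chain whose overall span exceeds $L$; this likely requires exploiting that each fragment already lived inside a valid column respecting the $L$-window, so the chain is contained in the union of these windows and its span is controlled accordingly.
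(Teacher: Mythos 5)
Your plan founders on the step you yourself flag as the main obstacle, and that obstacle exists only because you have missed the decisive structural consequence of the hypothesis. Since every job $J_j\in\mathcal{J}_f(X_L)$ has the same immediate predecessor \emph{and} the same immediate successor in every column of $S_j(X_L)$, any two columns $s_1,s_2\in S_j(X_L)$ must be identical: starting from $J_j$ and tracing successors forward and predecessors backward (each intermediate job again lies in $\mathcal{J}_f(X_L)$, so its neighbours are also forced), one sees that $s_1$ and $s_2$ contain exactly the same jobs in the same order. Hence $S_j(X_L)$ is a singleton for every fractional job, and the columns of $\Phi(X_L)$ have pairwise disjoint job sets. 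In particular, your ``maximal chains'' are not new objects stitched together from fragments of several distinct columns --- each chain \emph{is} one of the existing columns, which is already a valid single-machine schedule, so there is nothing to verify about the spread-time window. Your proposed fix would not work anyway: the union of two length-$L$ windows can span up to $2L$, so containment of the chain in that union does not bound its span by $L$. Without the column-identity observation, step 2 of your plan is simply unproven.

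The remaining fractionality therefore lives entirely in the machine-class index, not in the job sets, and this is what the rounding argument must exploit (the paper's proof does exactly this). For a fixed $s\in\Phi(X_L)$, inequality (\ref{eq:feasibility}) applied to the singleton $S_j(X_L)=\{s\}$ gives $\sum_{i\in\mathcal{I}(s)}x_s^i\ge 1$ with every summand strictly less than $1$, so $|\mathcal{I}(s)|\ge 2$; optimality of $X_L$ forces the weights $\{w_i: i\in\mathcal{I}(s)\}$ to be equal (otherwise shifting mass to the cheapest class would lower the objective), and one then sets $x_s^i=1$ for one chosen $i\in\mathcal{I}(s)$ and $x_s^{i'}=0$ for the others without increasing the cost. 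Disjointness of the job sets lets you perform this rounding independently for each $s\in\Phi(X_L)$, yielding an integral solution with cost equal to the LP optimum, hence optimal. Your ``averaging or exchange argument'' gestures at this but never pins down why rounding cannot raise the cost; the equal-weights (or, alternatively, round-to-cheapest-class) observation is the missing ingredient there as well.
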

\begin{proof}
According to the condition of the lemma, it is not difficult to see that the set $S_j(X_L)$ contains only one single-machine schedule for any $J_j\in \mathcal{J}_f(X_L)$, which implies that the job sets corresponding to different $s\in \Phi(X_L)$ are disjoint each other.

Consider any fixed $s\in \Phi(X_L)$. According to (\ref{eq:feasibility}), $\mathcal{I}({s})$ contains at least $2$ elements due to a fractional value of $x_s^i$ for any $i\in \mathcal{I}(s)$. As a result, the corresponding weights $\{w_i: i\in \mathcal{I}({s})\}$ are all equal since otherwise the objective value could be decreased with a new feasible LP solution. Consequently, by fixing any $i\in \mathcal{I}({s})$, replacing $x_{s}^i$ in $X_L$ with value $1$ and $x_{s}^{i'}$ with value $0$ for any $i'\in \mathcal{I}({s})\backslash\{i\}$, we obtain a new optimal LP solution in which all jobs in $s$ are scheduled non-fractionally.

We repeat the process described in the above paragraph for every $s\in \Phi(X_L)$ to obtain the desirable optimal integral solution thanks to the disjointness of the corresponding job sets as mentioned at the beginning of the proof.
\end{proof}

According to Lemmas~\ref{le:same_successor} and \ref{le:same_predecessor_successor}, uniqueness of immediate successors is a guarantee for the existence of an optimal ILP solution in our search tree. We
said a job $J_j\in \mathcal{J}_f(X_L)$ is \emph{separated} in solution $X_L$ if it has more than one immediate successor in different $s\in S_j(X_L)$. We design our branch-and-price tree such that at each node of the tree, we find a separated job with lowest index, and then create descendant nodes in such a way that each of the descendant nodes corresponds to a fixed immediate successor. Consequently, the branching strategy eventually eliminates all separated jobs, which is a sufficient condition for an integral solution. Such a partitioning strategy allows us to use the same pricing algorithm DP($\mathcal{P}$) as presented in Section~\ref{sct: pricing}, except that the precedence constraints $\mathcal{P}$ are updated along each path of the search tree, gradually fixing the immediate successors.

In searching for the active nodes in our branch-and-price tree, we follow a breadth-first search strategy, i.e., the unexplored neighbor nodes will be searched first before going further to any descendant node. With this strategy, which outperforms the depth-first strategy in our experiments, upper and lower bounds at all the nodes of the same level can be obtained and thus the node having the best lower bound can be searched first.

\bigskip\noindent
\textbf{Branch-and-Price Algorithm}
\begin{description}
\vspace{-6pt}
\item[Step 0.]Set the initial job precedence constraints in (\ref{eq:def-P}) as $\mathcal{P}$ with $\mathcal{J}_j^+=\mathcal{J}\backslash\{J_j\}$ ($j=1,\ldots,n$).
\vspace{-6pt}
\item[Step 1.]Move to the next unexplored node according to the breadth-first rule if there is any unexplored node. Output the best integral solution so far if all nodes have been considered.
    \vspace{-6pt}
\item[Step 2.]At the current node of the branching tree, obtain an initial feasible solution $X$ to the LP problem of the node with Algorithm GR. Starting with $X$, solve the LP problem of the node by column generation. More specifically, unless the reduced cost of each variable in $X$ is already nonnegative, in which case optimality of $X$ is already achieved, we obtain an optimal LP solution $\tilde{X}_{L}$ for the node by applying iteratively pricing algorithm DP($\mathcal{P}$) to add more columns to the LP relaxation problem and solve it again. The objective value of $\tilde{X}_L$ is used as the initial lower bound if the node is the root of the branching tree. If this value is not smaller than the current upper bound, then fathom the node and go to Step~1
\vspace{-6pt}
\item[Step 3.]If either $\tilde{X}_L$ is already integral or, after updating job precedence constraints $\mathcal{P}$ according to the positive components of $\tilde{X}_L$, there is no separated job in $\mathcal{J}_f(\tilde{X}_L)$, in which case the constructive proofs of Lemmas~1 and 2 are applied, we find an optimal integral solution of the node. If the objective value of this integral solution is smaller than the existing upper bound, we update the upper bound with the new value. If the current lower bound is equal to the upper bound, the integral solution is optimal to ILP~(\ref{eq:CG_obj})--(\ref{eq:xs_in0,1}) and we stop. Otherwise, fathom the node and go to Step~1.
\vspace{-6pt}
\item[Step 4.]Find the separate job $J_j$ with the lowest index in the current LP solution $\tilde{X}_{L}$. Branch the current node into $|{\cal J}_j^+|+1$ descendant nodes, one for each immediate successor of job $J_j$ corresponding to the current fractional solution $\tilde{X}_{L}$ and one for all the other possibilities. At each of the first $|{\cal J}_j^+|$ descendant nodes, job $J_j$ is restricted to having a fixed immediate successor from ${\cal J}_j^+$, while at the last descendant node, job $J_j$ is restricted to having an immediate successor not from ${\cal J}_j^+$.
\vspace{-6pt}
\item[Step 5.]For each descendant node, update the overall precedence constraints $\mathcal{P}$ by adding the newly imposed precedence constraint at the node. By taking out from consideration those columns (single-machine schedules) that violate $\mathcal{P}$, we form the LP problem of the node. Go to Step~1.
\end{description}

\section{\label{sec:computation}Computational experiments}

\subsection{On computational effectiveness}\label{se:effectiveness}

In this experiment, we test the computational effectiveness of our branch-and-price (B\&P) algorithm. We use the same parameter settings as in \citet{kroon1997exact} for easy comparison.

\begin{description}
\vspace{-6pt}
\item[Number of jobs:]$n = 100$, $200$ and $300$;
\vspace{-6pt}
\item[Number of job groups:]we consider instances of $g=3$, $4$ and $5$ job groups;
\vspace{-6pt}
\item[Number of machine classes:]all possible machine classes, i.e., $c=2^g-1$, where $g$ denotes the number of job groups;
\vspace{-6pt}
\item[Job duration:]uniformly distributed over $[1, 100]$, $[41, 100]$ and $[81, 100]$;
\vspace{-6pt}
\item[Spread-time:] $L=300$, $400$ and $500$.
\end{description}

Based on the combinations of number of jobs, number of job groups, job duration time and spread time, we have  $3\times3\times3\times3=81$ scenarios in total. For each scenario, we randomly generate 10 instances. All these instances are tested by CPLEX (version 12.3) on a PC Pentium-4 2G with 2Gbytes RAM memory on Linux platform. Time limit is set to one hour, i.e., an instance that cannot be solved within one hour will be marked as unsolved.
We list only the results on the B\&P algorithm because CPLEX cannot solve any of these instances within an hour based on the standard ILP formulation (\ref{eq:ZIP_obj})--(\ref{eq:x_ijk_in0,1}).
For each scenario, we show the average computation time (in seconds) of the B\&P algorithm. The numbers in superscript represent the number of instances that are not solved within an hour. These results are shown in Tables~\ref{n100}--\ref{n300}. We also show in Table~\ref{gap} (with the numbers before and after the slashes) the average relative gap at the root node and that after one hour time limit for the case of $3$ job groups.
\begin{table}[h]
\caption{Running Time (sec) with $n =100$}\label{n100}
\begin{center}{\footnotesize
\begin{tabular}{c|r|r|r|r}
Spread $L$   &  Job Duration & 3 Job Groups & 4 Job Groups & 5 Job Groups \\
\hline
\multirow{3}*{300} & $1-100$   & 2.31   & 4.09    & 2.41
\\
  & $41-100$   & 0.43    & 0.68    & 0.44
\\
    & $81-100$   & 0.16    & 0.16    & 0.29
\\
\hline
\multirow{3}*{400} & $1-100$   & 14.29   & 7.72    & 10.84
 \\
  & $41-100$    & 2.41    & 2.06    & 4.32
\\
    & $81-100$    & 0.66    & 0.95    & 2.17
\\
\hline
\multirow{3}*{500} & $1-100$    & 37.12    & 20.13    & 23.63
\\
  & $41-100$    & 13.70    & 6.52    & 12.07
\\
    & $81-100$  & 1.86    & 4.23    & 5.63
\\
\hline
\end{tabular}
}
\end{center}
\end{table}
\begin{table}[h]
\caption{Running Time (sec) with $n = 200$}\label{n200}
\begin{center}
{\footnotesize
\begin{tabular}{c|l|r|r|r}
Spread $L$   &  Job Duration  & 3 Job Groups & 4 Job Groups & 5 Job Groups \\
\hline
\multirow{3}*{ 300} & $1-100$    & 26.30    & 37.68   & 41.73
\\
  & $41-100$    & 3.29    & 9.98    & 12.02
\\
    & $81-100$   & 0.65    & 1.14   & 1.98
\\
\hline
\multirow{3}*{ 400} & $1-100$    & 92.49    & 115.04    & 328.58
\\
  & $41-100$    & 273.07   & 98.77    & 24.81
\\
    & $81-100$    & 7.88   & 12.16   & 16.57
\\
\hline
\multirow{3}*{ 500} & $1-100$    & $325.25$    & $595.69$   & 699.14
\\
  & $41-100$    & 193.60    & 170.32   & 204.95
\\
    & $81-100$    & 16.24   & 31.93    & 48.95
\\
\hline
\end{tabular}
}
\end{center}
\end{table}
\begin{table}[h]
\caption{Running Time (sec) with $n = 300$}\label{n300}
\begin{center}
{\footnotesize
\begin{tabular}{c|r|r|r|r}
Spread $L$   &  Job Duration  & 3 Job Groups & 4 Job Groups & 5 Job Groups \\
\hline
\multirow{3}*{ 300} & $1-100$    & $194.05^1$   & 185.01    & 326.10
\\
  & $41-100$    & 36.74    & 31.44  & 53.94
\\
    & $81-100$    & 1.64   & 5.28    & 5.59
\\
\hline
\multirow{3}*{ 400} & $1-100$   & $722.39^2$   & $345.13^4$   & $743.73^3$
\\
  & $41-100$    & $117.49^1$  & 490.93   & 234.99
\\
    & $81-100$   & 107.37  & 69.30   & 101.44
\\
\hline
\multirow{3}*{ 500} & $1-100$   & $2196.64^9$  & $1113.74^9$  & $2070.16^8$
\\
  & $41-100$   & $1501.06^1$& $1244.23^2$  & $1485.32^5$
\\
    & $81-100$    & 217.51   & 609.01   & 643.59
\\
\hline
\end{tabular}
}
\end{center}
\end{table}

\begin{table}[h]
\caption{Average Relative Gap (\%) with 3 Job Groups}\label{gap}
\begin{center}
{\footnotesize
\begin{tabular}{c|r|r|r|r}
Spread $L$   &  Job Duration  & $n=100$ & $n=200$ & $n=300$ \\
\hline
\multirow{3}*{ 300} & $1-100$    & 14.25~/~0.00   & 16.02~/~0.00    & 12.69~/~1.50
\\
  & $41-100$    & 8.48~/~0.00    & 11.39~/~0.00  & 10.67~/~0.00
\\
    & $81-100$    & 6.02~/~0.00   & 9.36~/~0.00    & 10.74~/~0.00
\\
\hline
\multirow{3}*{ 400} & $1-100$   & 14.74~/~0.00   & 14.65~/~0.00  & 11.06~/~5.15
\\
  & $41-100$    & 10.29~/~0.00  & 11.56~/~0.00   & 6.94~/~1.08
\\
    & $81-100$   & 11.75~/~0.00  & 10.43~/~0.00   & 12.42~/~0.00
\\
\hline
\multirow{3}*{ 500} & $1-100$   & 17.34~/~0.00  & 14.83~/~0.00  & 12.01~/~10.43
\\
  & $41-100$   & 13.14~/~0.00 & 10.52~/~0.00  & 10.66~/~4.11
\\
    & $81-100$    & 10.88~/~0.00   & 8.72~/~0.00   & 9.37~/~0.00
\\
\hline
\end{tabular}
}
\end{center}
\end{table}

From Tables~\ref{n100}--\ref{n300}, we find that the B\&P algorithm can solve all the instances with 100 and 200 jobs in a reasonable amount of time, but the instances with 300 jobs are sometimes difficult, especially in the case with large spread time. In the appendix, we provide a neighborhood search heuristic to be embedded into our B\&P algorithm and show that the embedded B\&P algorithm runs faster in many difficult scenarios. This heuristic can generate local optimal integer solutions by solving a reduced mixed integer program after randomly removing a number of columns. These local optimal integer solutions help improve the lower bounds and hence speed up the embedded B\&P algorithm.

In Tables~\ref{n100}--\ref{n300}, the running time for solving each case increases with the length of spread-time and also increases with the range of job duration, which accord with our intuition.

\begin{table}[t]
\caption{Total Flexibility with Weights $(1,\ldots,1)$}\label{n111}
\begin{center}{\footnotesize
\begin{tabular}{c|r|rrrrrrrrrr|r}
Spread  &  Job  & \multicolumn{10}{c|}{Machine Distributions } & Avg\# \\
$L$ & Duration &\multicolumn{10}{c|}{ in Percentage}  &    \\
\hline
\multirow{3}*{300} & $1-100$    & 15.7 & 11.0 & 3.3 & 7.9 & 24.3 & 19.1 & 10.8 & 7.2 & 0.6 & 0.0 & 16.4 \\
  & $41-100$    & 30.0 & 8.5 & 2.9 & 7.1 & 28.4 & 15.8 & 4.6 & 2.1 & 0.6 & 0.0 & 19.7 \\
    & $81-100$    & 31.0 & 37.4 & 2.7 & 4.9 & 17.0 & 5.7 & 0.9 & 0.5 & 0.0 & 0.0 & 23.0 \\
\hline
\multirow{3}*{400} & $1-100$   & 9.9 & 8.0 & 2.4 & 2.3 & 22.4 & 19.7 & 20.8 & 10.1 & 4.5 & 0.0 & 13.6 \\
  & $41-100$    & 16.2 & 11.0 & 2.1 & 9.6 & 24.7 & 21.6 & 9.7 & 3.8 & 0.7 & 0.6 & 16.3 \\
    & $81-100$   & 21.1 & 9.2 & 1.6 & 9.4 & 31.7 & 14.9 & 6.5 & 5.4 & 0.0 & 0.0 & 18.4 \\
\hline
\multirow{3}*{500} & $1-100$   & 12.1 & 3.2 & 1.0 & 3.3 & 14.8 & 20.8 & 19.9 & 10.4 & 11.2 & 3.3 & 11.8 \\
  & $41-100$   & 16.6 & 4.5 & 1.3 & 7.9 & 19.1 & 23.0 & 15.9 & 8.2 & 3.7 & 0.0  & 14.2 \\
    & $81-100$  &  18.1 & 5.7 & 0.6 & 9.8 & 21.9 & 26.4 & 9.5 & 6.8 & 1.3 & 0.0  & 15.8 \\
\hline
\end{tabular}}
\end{center}
\end{table}

\subsection{On machine flexibility}

Flexibility is one of the most important aspects of production systems. \citet{jordan1995principles} study the process flexibility resulted from being able to build different types of products at the same time in the same manufacturing plant or on the same production line. They show that limited process flexibility may still yield most of the benefits of total process flexibility. In this experiment, we test this statement on machine flexibility for processing fixed interval jobs, where the flexibility of a machine is measured by its capability of processing jobs of different classes.

We use the same parameter settings as in the experiment of Section 4.1, except that the number of job groups is set to 10 and the number of jobs is set to 50. The total number of machine classes is $c=2^{10}-1$ with 10 \emph{tiers} of classes, where each of $C_i^{10}=10!/(i!(10-i)!)$ tier-$i$ classes consists of those machines that are each capable of processing jobs of $i$ job groups. Denote by $\tilde{M}^i$ the set of machines of tier-$i$ classes ($i=1, \ldots, 10$). Therefore, the higher the value of $i$, the more capable the machines of $\tilde{M}^i$.

Firstly, using the average number of used machines (Avg\#) as an indicator, we compare two extreme cases. In one case, we set weights (costs) $w_i=1$ for all 10 machine sets $\tilde{M}^i$ ($i=1,\ldots,10$), while in the other case, we set the weights to 1 for  $\tilde{M}^1$ and the weights to 999 for sets $\tilde{M}^i$ ($i\ge 2$). Obviously, we have total flexibility in the former case and  no flexibility in the latter case. We randomly generate 10 instances for each scenario of the two cases and the results are shown in Tables~\ref{n111} and \ref{n1999}, where we also provide percentage distribution of used machines across the 10 machine sets.
\begin{table}[htbp]
\caption{No Flexibility with Weights $(1,999,\ldots,999)$}\label{n1999}
\begin{center}{\footnotesize
\begin{tabular}{c|r|rrrrrrrrrr|r}
Spread  &  Job & \multicolumn{10}{c|}{Machine Distributions } & Avg\# \\
$L$ & Duration &\multicolumn{10}{c|}{ in Percentage}  &    \\
\hline
\multirow{3}*{300} & $1-100$ &100.0 & 0.0 & 0.0 & 0.0 & 0.0 & 0.0 & 0.0 & 0.0 & 0.0 & 0.0 & 31.7 \\
& $41-100$ &100.0 & 0.0 & 0.0 & 0.0 & 0.0 & 0.0 & 0.0 & 0.0 & 0.0 & 0.0 & 34.0 \\
& $81-100$ &100.0 & 0.0 & 0.0 & 0.0 & 0.0 & 0.0 & 0.0 & 0.0 & 0.0 & 0.0 & 36.5 \\
 \hline
\multirow{3}*{400} & $1-100$ &100.0 & 0.0 & 0.0 & 0.0 & 0.0 & 0.0 & 0.0 & 0.0 & 0.0 & 0.0 & 27.8 \\
& $41-100$ &100.0 & 0.0 & 0.0 & 0.0 & 0.0 & 0.0 & 0.0 & 0.0 & 0.0 & 0.0 & 30.4 \\
& $81-100$ &100.0 & 0.0 & 0.0 & 0.0 & 0.0 & 0.0 & 0.0 & 0.0 & 0.0 & 0.0 & 31.4 \\
 \hline
\multirow{3}*{500} & $1-100$ &100.0 & 0.0 & 0.0 & 0.0 & 0.0 & 0.0 & 0.0 & 0.0 & 0.0 & 0.0 & 24.1 \\
& $41-100$ &100.0 & 0.0 & 0.0 & 0.0 & 0.0 & 0.0 & 0.0 & 0.0 & 0.0 & 0.0 & 27.2 \\
& $81-100$ &100.0 & 0.0 & 0.0 & 0.0 & 0.0 & 0.0 & 0.0 & 0.0 & 0.0 & 0.0 & 29.5 \\
 \hline
\end{tabular}}
\end{center}
\end{table}

We see from Tables~\ref{n111} and \ref{n1999} that, when compared with the case of no flexibility, the Avg\# values in the case of total flexibility are significantly lower. When there is total flexibility, the selected machines can be freely from all 10 machine sets. In contrast, the selected machines are all from set $\tilde{M}^1$ (100\%) if there is no flexibility.

Now let us look at the relationship between machine flexibility and Avg\# values. By changing the weights of the machine sets, we gradually increase machine flexibility. The experiment results are shown in Tables~\ref{n11999} and \ref{n111999}.

\begin{table}[h]
\caption{Little Flexibility with Weights $(1,1,999,\ldots,999)$ \label{n11999}}
\begin{center}
{\footnotesize
\begin{tabular}{c|r|rrrrrrrrrr|r}
Spread  &  Job   & \multicolumn{10}{c|}{Machine Distributions } & Avg\# \\
$L$ & Duration &\multicolumn{10}{c|}{ in Percentage}  &    \\
\hline
\multirow{3}*{300} & $1-100$ &6.0 & 94.0 & 0.0 & 0.0 & 0.0 & 0.0 & 0.0 & 0.0 & 0.0 & 0.0 & 17.9 \\
& $41-100$ &9.6 & 90.4 & 0.0 & 0.0 & 0.0 & 0.0 & 0.0 & 0.0 & 0.0 & 0.0 & 20.6 \\
& $81-100$ &27.6 & 72.4 & 0.0 & 0.0 & 0.0 & 0.0 & 0.0 & 0.0 & 0.0 & 0.0 & 23.8 \\
 \hline
\multirow{3}*{400} & $1-100$ &2.0 & 98.0 & 0.0 & 0.0 & 0.0 & 0.0 & 0.0 & 0.0 & 0.0 & 0.0 & 15.1 \\
& $41-100$ &5.0 & 95.0 & 0.0 & 0.0 & 0.0 & 0.0 & 0.0 & 0.0 & 0.0 & 0.0 & 17.0 \\
& $81-100$ &9.1 & 90.9 & 0.0 & 0.0 & 0.0 & 0.0 & 0.0 & 0.0 & 0.0 & 0.0 & 18.6 \\
 \hline
\multirow{3}*{500} & $1-100$ &0.8 & 99.2 & 0.0 & 0.0 & 0.0 & 0.0 & 0.0 & 0.0 & 0.0 & 0.0 & 13.2 \\
& $41-100$ &1.8 & 98.2 & 0.0 & 0.0 & 0.0 & 0.0 & 0.0 & 0.0 & 0.0 & 0.0 & 15.2 \\
& $81-100$ &5.3 & 94.7 & 0.0 & 0.0 & 0.0 & 0.0 & 0.0 & 0.0 & 0.0 & 0.0 & 16.6 \\
 \hline
\end{tabular}
}
\end{center}
\end{table}
\begin{table}[h]
\caption{Small Flexibility with Weights $(1,1,1,999,\ldots,999)$ \label{n111999}}
\begin{center}
{\footnotesize
\begin{tabular}{c|r|rrrrrrrrrr|r}
Spread  &  Job  & \multicolumn{10}{c|}{Machine Distributions } & Avg\# \\
$L$ & Duration &\multicolumn{10}{c|}{ in Percentage}  &    \\
\hline
\multirow{3}*{300} & $1-100$ &12.8 & 7.8 & 79.4 & 0.0 & 0.0 & 0.0 & 0.0 & 0.0 & 0.0 & 0.0 & 16.5 \\
& $41-100$ &24.3 & 23.6 & 52.0 & 0.0 & 0.0 & 0.0 & 0.0 & 0.0 & 0.0 & 0.0 & 19.7 \\
& $81-100$ &31.3 & 38.3 & 30.4 & 0.0 & 0.0 & 0.0 & 0.0 & 0.0 & 0.0 & 0.0 & 23.0 \\
 \hline
\multirow{3}*{400} & $1-100$ &4.0 & 4.8 & 91.2 & 0.0 & 0.0 & 0.0 & 0.0 & 0.0 & 0.0 & 0.0 & 13.6 \\
& $41-100$ &14.4 & 6.0 & 79.6 & 0.0 & 0.0 & 0.0 & 0.0 & 0.0 & 0.0 & 0.0 & 16.3 \\
& $81-100$ &16.4 & 11.5 & 72.2 & 0.0 & 0.0 & 0.0 & 0.0 & 0.0 & 0.0 & 0.0 & 18.4 \\
 \hline
\multirow{3}*{500} & $1-100$ &4.7 & 0.8 & 94.4 & 0.0 & 0.0 & 0.0 & 0.0 & 0.0 & 0.0 & 0.0 & 11.8 \\
& $41-100$ &11.7 & 5.4 & 83.0 & 0.0 & 0.0 & 0.0 & 0.0 & 0.0 & 0.0 & 0.0 & 14.2 \\
& $81-100$ &11.2 & 13.8 & 75.0 & 0.0 & 0.0 & 0.0 & 0.0 & 0.0 & 0.0 & 0.0 & 15.8 \\
 \hline
\end{tabular}
}
\end{center}
\end{table}

From Tables~\ref{n11999} and \ref{n111999}, we find that limited flexibility can achieve almost the same benefits of total flexibility. As shown in Table~\ref{n11999}, the Avg\# values are close to the values in Table~\ref{n111} and those in Table~\ref{n111999} are almost the same as the values in Table~\ref{n111}. Our calculations show that, when machine choices are limited to sets $\tilde{M}^1$ and $\tilde{M}^2$, there is a 92.3\% reduction in average Avg\# values and, when machine choices are limited to $\tilde{M}^1$, $\tilde{M}^2$ and $\tilde{M}^3$, the reduction achieves 99.9\%! Therefore, higher machine flexibility is almost useless in reducing Avg\# values.

Finally, we examine the impact of more general weights on machine selection. We apply a percentage weight increment $p$ when a machine is made capable of processing one more group of jobs. We set the weights to 1 for machines that can process jobs from one group only and the weights to $1+kp$ for machines which can process $k$ more groups of jobs. Let $p=0.03$, $0.20$, $0.80$ and $1.00$, respectively. We present our test results in Tables~\ref{n1.03}--\ref{n1234}.

\begin{table}[h!]
\caption{Machine Selection with Weights $1+kp$ ($k=0,1,\ldots,9$)
and $p=0.03$}\label{n1.03}
\begin{center}
{\footnotesize
\begin{tabular}{c|r|rrrrrrrrrr|r}
Spread  &  Job   & \multicolumn{10}{c|}{Machine Distributions } & Avg\# \\
$L$ & Duration &\multicolumn{10}{c|}{ in Percentage}  &    \\
\hline
\multirow{3}*{300} & $1-100$ &25.6 & 47.5 & 23.7 & 3.1 & 0.0 & 0.0
 & 0.0 & 0.0 & 0.0 & 0.0 & 16.4 \\
& $41-100$ &39.3 & 38.7 & 21.1 & 0.9 & 0.0 & 0.0 & 0.0 & 0.0 & 0.0 & 0.0 & 19.7
\\
& $81-100$ &42.8 & 42.2 & 15.0 & 0.0 & 0.0 & 0.0 & 0.0 & 0.0 & 0.0 & 0.0 & 23.0
\\
 \hline
\multirow{3}*{400} & $1-100$ &21.4 & 44.3 & 31.2 & 3.1 & 0.0 & 0.0
 & 0.0 & 0.0 & 0.0 & 0.0 & 13.6 \\
& $41-100$ &30.4 & 44.0 & 24.4 & 1.3 & 0.0 & 0.0 & 0.0 & 0.0 & 0.0 & 0.0 & 16.3 \\
& $81-100$ &37.0 & 47.0 & 15.5 & 0.5 & 0.0 & 0.0 & 0.0 & 0.0 & 0.0 & 0.0 & 18.4
\\
 \hline
\multirow{3}*{500} & $1-100$ &22.4 & 44.0 & 30.9 & 2.7 & 0.0 & 0.0
 & 0.0 & 0.0 & 0.0 & 0.0 & 11.8 \\
& $41-100$ &27.6 & 45.0 & 25.4 & 1.9 & 0.0 & 0.0 & 0.0 & 0.0 & 0.0 & 0.0 & 14.2
\\
& $81-100$ &27.7 & 50.7 & 19.6 & 2.1 & 0.0 & 0.0 & 0.0 & 0.0 & 0.0 & 0.0 & 15.8
\\
 \hline
\end{tabular}
}
\end{center}
\end{table}
\begin{table}[h!]
\caption{Machine Selection with Weights $1+kp$ ($k=0,1,\ldots,9$)
and $p=0.20$}\label{n1.20}
\begin{center}
{\footnotesize
\begin{tabular}{c|r|rrrrrrrrrr|r}
Spread  &  Job  & \multicolumn{10}{c|}{Machine Distributions } & Avg\# \\
$L$ & Duration &\multicolumn{10}{c|}{ in Percentage}  &    \\
\hline
\multirow{3}*{300} & $1-100$ &25.8 & 48.7 & 21.1 & 3.7 & 0.7 & 0.0
 & 0.0 & 0.0 & 0.0 & 0.0 & 16.4 \\
& $41-100$ &36.9 & 43.5 & 18.5 & 1.1 & 0.0 & 0.0 & 0.0 & 0.0 & 0.0 & 0.0 & 19.7
\\
& $81-100$ &44.1 & 40.5 & 15.4 & 0.0 & 0.0 & 0.0 & 0.0 & 0.0 & 0.0 & 0.0 & 23.0
\\
 \hline
\multirow{3}*{400} & $1-100$ &22.1 & 43.5 & 31.4 & 2.2 & 0.8 & 0.0
 & 0.0 & 0.0 & 0.0 & 0.0 & 13.6 \\
& $41-100$ &26.8 & 51.7 & 20.3 & 0.7 & 0.6 & 0.0 & 0.0 & 0.0 & 0.0 & 0.0 & 16.3
\\
& $81-100$ &38.6 & 43.3 & 18.1 & 0.0 & 0.0 & 0.0 & 0.0 & 0.0 & 0.0 & 0.0 & 18.4
\\
 \hline
\multirow{3}*{500} & $1-100$ &24.9 & 39.8 & 30.8 & 4.5 & 0.0 & 0.0
 & 0.0 & 0.0 & 0.0 & 0.0 & 11.8 \\
& $41-100$ &25.5 & 51.8 & 20.5 & 1.6 & 0.6 & 0.0 & 0.0 & 0.0 & 0.0 & 0.0 & 14.3
\\
& $81-100$ &29.1 & 48.7 & 20.8 & 0.7 & 0.7 & 0.0 & 0.0 & 0.0 & 0.0 & 0.0 & 15.8
\\
 \hline
\end{tabular}
}
\end{center}
\end{table}
\begin{table}[h!]
\caption{Machine Selection with Weights $1+kp$ ($k=0,1,\ldots,9$) and
$p=0.80$}\label{n1.80}
\begin{center}
{\footnotesize
\begin{tabular}{c|r|rrrrrrrrrr|r}
Spread  &  Job   & \multicolumn{10}{c|}{Machine Distributions } & Avg\# \\
$L$ & Duration &\multicolumn{10}{c|}{ in Percentage}  &    \\
\hline
\multirow{3}*{300} & $1-100$ &36.8 & 45.6 & 16.3 & 0.6 & 0.7 & 0.0 & 0.0 & 0.0 & 0.0 & 0.0 & 17.5 \\
& $41-100$ &53.3 & 34.4 & 10.8 & 1.5 & 0.0 & 0.0 & 0.0 & 0.0 & 0.0 & 0.0 & 21.3 \\
& $81-100$ &63.2 & 27.0 & 9.8 & 0.0 & 0.0 & 0.0 & 0.0 & 0.0 & 0.0 & 0.0 & 25.0 \\
 \hline
\multirow{3}*{400} & $1-100$ &36.7 & 40.1 & 21.9 & 0.7 & 0.6 & 0.0 & 0.0 & 0.0 & 0.0 & 0.0 & 14.9 \\
& $41-100$ &40.8 & 43.5 & 15.1 & 0.6 & 0.0 & 0.0 & 0.0 & 0.0 & 0.0 & 0.0 & 17.4 \\
& $81-100$ &51.6 & 38.7 & 8.7 & 1.0 & 0.0 & 0.0 & 0.0 & 0.0 & 0.0 & 0.0 & 19.8 \\
 \hline
\multirow{3}*{500} & $1-100$ &31.0 & 49.9 & 16.6 & 2.5 & 0.0 & 0.0 & 0.0 & 0.0 & 0.0 & 0.0 & 12.7 \\
& $41-100$ &40.9 & 43.2 & 14.9 & 1.1 & 0.0 & 0.0 & 0.0 & 0.0 & 0.0 & 0.0 & 15.5 \\
& $81-100$ &37.7 & 46.9 & 14.2 & 1.2 & 0.0 & 0.0 & 0.0 & 0.0 & 0.0 & 0.0 & 16.6 \\
 \hline
\end{tabular}
}
\end{center}
\end{table}
\begin{table}[h!]
\caption{Machine Selection with Weights $1+kp$ ($k=0,1,\ldots,9$) and $p=1.00$}\label{n1234}
\begin{center}
{\footnotesize
\begin{tabular}{c|r|rrrrrrrrrr|r}
Spread  &  Job   & \multicolumn{10}{c|}{Machine Distributions } & Avg\# \\
$L$ & Duration &\multicolumn{10}{c|}{ in Percentage}  &    \\
\hline
\multirow{3}*{300} & $1-100$    & 100.0 & 0.0 & 0.0 & 0.0 & 0.0 & 0.0 & 0.0 & 0.0 & 0.0 & 0.0  & 31.7 \\
  & $41-100$    & 100.0 & 0.0 & 0.0 & 0.0 & 0.0 & 0.0 & 0.0 & 0.0 & 0.0 & 0.0 & 34.0 \\
    & $81-100$    & 100.0 & 0.0 & 0.0 & 0.0 & 0.0 & 0.0 & 0.0 & 0.0 & 0.0 & 0.0 & 36.5 \\
\hline
\multirow{3}*{400} & $1-100$   & 100.0 & 0.0 & 0.0 & 0.0 & 0.0 & 0.0 & 0.0 & 0.0 & 0.0 & 0.0 & 27.8 \\
  & $41-100$    & 99.7 & 0.3 & 0.0 & 0.0 & 0.0 & 0.0 & 0.0 & 0.0 & 0.0 & 0.0 & 30.3 \\
    & $81-100$   & 100.0 & 0.0 & 0.0 & 0.0 & 0.0 & 0.0 & 0.0 & 0.0 & 0.0 & 0.0 & 31.4 \\
\hline
\multirow{3}*{500} & $1-100$   & 99.6 & 0.4 & 0.0 & 0.0 & 0.0 & 0.0 & 0.0 & 0.0 & 0.0 & 0.0  & 24.0 \\
  & $41-100$   & 100.0 & 0.0 & 0.0 & 0.0 & 0.0 & 0.0 & 0.0 & 0.0 & 0.0 & 0.0  & 27.2 \\
    & $81-100$  &  99.6 & 0.4 & 0.0 & 0.0 & 0.0 & 0.0 & 0.0 & 0.0 & 0.0 & 0.0  & 29.4 \\
\hline
\end{tabular}
}
\end{center}
\end{table}

From Tables~\ref{n1.03}--\ref{n1234} we find that limited machine flexibility by using machines that can process at most three or four groups of jobs are the best choices in most situations ($p= 0.03$, $0.20$ and $0.80$). When multiple-purpose machines are really expensive ($p=1.00$), machine flexibility offers no net benefit.

\section{Conclusions}\label{sec:conclusions}

We have addressed the tactical fixed job scheduling problem with spread-time constraints (TFJSS) and presented a  branch-and-price algorithm to solve randomly generated instances with up to 300 jobs within one hour. Compared with CPLEX, our algorithm is dominantly more efficient. We have also investigated the impact of machine flexibility by computational experiments and found that limited machine flexibility, by using machines that are capable of processing at most three or four groups of jobs, is the best choice in most situations. We additionally afford a neighborhood search heuristic in the appendix to be embedded to improve our branch-and-price algorithm for some difficult instances of the TFJSS problem. We remark that further exploration can be made to possibly improve this embedding process by, for example, a dynamic setting of the two parameters (see the appendix for details), $\theta$ and the maximum number of iterations, instead of the current fixed values.

\section*{Acknowledgements}

\noindent Xiandong Zhang thanks the support of National Natural Science Foundation of China (Projects No. 71171058, No. 70832002 and No. 70971100).


\bibliographystyle{model1-num-names}
\bibliography{TFJSS}


\appendix

\section{An embedding neighborhood search heuristic}

In this part, we present a neighborhood search heuristic, abbreviated as NS, that can be embedded into the B\&P algorithm for an improved upper bound. The main idea of the NS heuristic is to find randomly some local optimal integer solutions. The best feasible solution is then saved as the current upper bound in the B\&P algorithm.

In the column generation step of the B\&P algorithm, columns are generated and added to the constraint matrix of the LP relaxation under consideration. The NS heuristic generates a local optimal integer solution through a two-step procedure based on the current columns.  The idea is similar to those for the set covering problem studied in \citet{caprara1999heuristic}, \citet{caprara2000algorithms} and \citet{lan2007effective}.

First, a number of current columns are removed randomly from the current LP solution $X$, after which the solution may become infeasible due to some uncovered rows. Then the partial solution is made feasible again by solving a reduced mixed integer program that is made up of the uncovered rows and the columns covering these rows. Since CPLEX is used here at each iteration, the size of the program needs to be controlled.  We use parameter ($0<\theta<1$) to control how many columns will be removed randomly, which is equal to $\theta |X|$, where $|X|$ denotes the number of columns used in the current LP solution.

Once a better solution is found, the NS will be executed to find the next better solution, if any, in the neighborhood of the best solution so far. This is done iteratively until the number of iterations reaches a given number.
After each iteration, if a better ILP solution is identified, all redundant columns, those corresponding to positive components of $X$ removal of which still keeps $X$ feasible, will be removed with the largest cost removed first.

Here we used the same problems as in Section~\ref{se:effectiveness} to test the effectiveness of our embedded B\&P algorithm, which we denote by B\&P-NS. In this experiment, we set $\theta$ at $0.6$ and the number of iterations at $40$ by experience. We present our test results in Table~\ref{n300n}, from which we see that B\&P-NS outperforms B\&P for all difficult instances and the total number of unsolved instances is reduced by 41\%.

\begin{table}[h!]
\caption{Comparison on Running Time (sec)} \label{n300n}
\begin{center}
{\footnotesize
\begin{tabular}{c|r|rr|rr|rr}
   & Job Duration & \multicolumn{2}{|c}{3 Job Groups}
          & \multicolumn{2}{|c}{4 Job Groups} & \multicolumn{2}{|c}{5 Job Groups} \\
  & Time  & B\&P-NS & B\&P & B\&P-NS & B\&P & B\&P-NS & B\&P \\
\hline
\multirow{2}*{$n=200$} & $1-100$  & 316.44  & $325.25$  & 344.78  & $595.69$  & 286.30  & 699.14
\\
  & $41-100$  & 62.80  & 193.60  & 135.31  & 170.32  & 89.85  & 204.95
\\
$L=500$    & $81-100$  & 15.25  & 16.24  & 19.57  & 31.93  & 28.17  & 48.95
\\
\hline
\multirow{2}*{$n=300$} & $1-100$  & 234.09  & $722.39^2$ & 190.64  & $345.13^4$ & 297.76  & $743.73^3$
\\
  & $41-100$  & 152.72  & $117.488^1$ & 85.51  & 490.93  & 64.47  & 234.99
\\
$L=400$ & $81-100$  & 86.40  & 107.37 & 44.82  & 69.30  & 67.52  & 101.44
\\
\hline
\multirow{2}*{$n=300$} & $1-100$  & $1926.02^8$ & $2196.64^9$ & $1858.9^5$ & $1113.74^9$ & $1424.84^5$ & $2070.16^8$
\\
  & $41-100$  & $674.91^2$ & $1501.06^1$ & $742.07^2$ & $1244.23^2$ & $690.38^4$ & $1485.32^5$
\\
$L=500$    & $81-100$  & 151.75  & 217.51  & 117.73  & 609.01  & 172.22  & 643.59
\\
\hline
\end{tabular}
}
\end{center}
\end{table}












\end{document}